\documentclass[letterpaper]{article} %
\usepackage{aaai2026}  %
\usepackage{times}  %
\usepackage{helvet}  %
\usepackage{courier}  %
\usepackage[hyphens]{url}  %
\usepackage{graphicx} %
\urlstyle{rm} %
\usepackage{natbib}  %
\usepackage{caption} %
\frenchspacing  %
\setlength{\pdfpagewidth}{8.5in} %
\setlength{\pdfpageheight}{11in} %
\usepackage{algorithm}
\usepackage{algorithmic}

\usepackage{newfloat}
\usepackage{listings}
\DeclareCaptionStyle{ruled}{labelfont=normalfont,labelsep=colon,strut=off} %
\lstset{%
        basicstyle={\footnotesize\ttfamily},%
        numbers=left,numberstyle=\footnotesize,xleftmargin=2em,%
        aboveskip=0pt,belowskip=0pt,%
        showstringspaces=false,tabsize=2,breaklines=true}
\floatstyle{ruled}
\newfloat{listing}{tb}{lst}{}
\floatname{listing}{Listing}
\pdfinfo{
/TemplateVersion (2026.1)
}

\usepackage{todonotes}

\usepackage{amsmath,amsthm,mathtools}
\newcommand{\poly}{\ensuremath{\mathsf{poly}}}

\usepackage{paralist}

\newcommand{\satisfying}[1]{\ensuremath{\mathsf{Sol}(#1)}}
\newtheorem{definition}{Definition}

\newtheorem{theorem}{Theorem}
\newtheorem{lemma}{Lemma}

\usepackage{paralist}

\setcounter{secnumdepth}{2} %

\title{The Limitations and Power of NP-Oracle-Based Functional Synthesis Techniques}

\author{
	Brendan Juba\textsuperscript{\rm 1} \qquad    
	Kuldeep S. Meel\textsuperscript{\rm 2,3}
}
\affiliations{\textsuperscript{\rm 1}Washington University in St.\ Louis\\
	\textsuperscript{\rm 2}Georgia Institute of Technology\\
	\textsuperscript{\rm 3}University of Toronto
}

\begin{document}

\maketitle

\begin{abstract}
	Given a Boolean relational specification between inputs and outputs, the 
	problem of functional synthesis is to construct a function that maps each 
	assignment of the input to an assignment of the output such that each tuple 
	of input and output assignments meets the specification. The past decade has 
	witnessed significant improvement in the scalability of 
	functional synthesis tools, allowing them to handle problems with tens 
	of thousands of variables. A common ingredient in these approaches is their 
	reliance on SAT solvers, thereby exploiting the breakthrough advances in SAT 
	solving over the past three decades. While the recent techniques have been 
	shown to perform well in practice, there is little theoretical understanding 
	of the limitations and power of these approaches.
	
	The primary contribution of this work is to initiate a systematic theoretical 
	investigation into the power of functional synthesis approaches that rely on 
	NP oracles. We first  show that even 
	when small Skolem functions exist, naive bit-by-bit learning approaches fail 
	due to the relational nature of specifications. We establish fundamental limitations of interpolation-based 
	approaches  proving that even when 
	small Skolem functions exist, resolution-based interpolation must produce an
	exponential-size circuits. We prove that access to an NP oracle is inherently 
	necessary for efficient synthesis.  Our main technical result shows that it is possible to use NP oracles to synthesize small Skolem functions in time polynomial  in the size of the specification and the size of the smallest sufficient set of witnesses, establishing positive results for a broad class of relational 
	specifications.

\end{abstract}

\section{Introduction}
Functional synthesis is a fundamental problem in computer science 
wherein the task is to synthesize a function that meets a given 
relational specification. Formally, let $X = \{X_1, X_2, \ldots, X_n\}$ 
be the vector of Boolean variables representing inputs and let 
$Y = \{Y_1, Y_2, \ldots Y_m\}$ be the vector of Boolean variables 
representing outputs, and let $F(X,Y)$ be a Boolean relational 
specification. Then, the problem of Boolean functional synthesis is to 
output a $m$-tuple $\Psi = \langle \psi_1, \ldots, 
\psi_m \rangle$ of Boolean functions $\psi_i$ such that 
$\exists Y F(X,Y) \equiv F(X,\Psi(X))$. Each of these functions are 
referred to as Skolem functions, and consequently, functional synthesis 
is also referred to as Skolem synthesis.

Functional synthesis has a wide range of 
applications in areas such as circuit synthesis~\cite{KS00}, program 
synthesis~\cite{SGF13}, automated program repair~\cite{JMF14}, 
cryptography~\cite{MM20}, and logic minimization~\cite{B89,BS89}. 
Given the 
computational intractability of the problem, the design of scalable 
techniques has remained the primary objective.

With the availability of powerful SAT solvers, the past two decades 
have seen a flurry of approaches to functional synthesis that rely on 
these solvers to do the \emph{heavy lifting}. These approaches have 
led to remarkable improvements in the scalability of state-of-the-art 
functional synthesis tools. Concretely,
over a standard suite of 609 benchmarks, the state-of-the-art tool in 
2016 could handle only 210 instances, while the state-of-the-art tool 
in 2023 can handle 509 instances~\cite{Golia23}. These impressive advances motivate 
the development of a foundational approach to understanding the power 
of modern functional synthesis techniques, which was also highlighted as one of the major challenges for synthesis community
\cite{AFMPS24}.

The primary contribution of this work is to initiate a systematic
theoretical investigation into the power of functional synthesis
approaches that rely on NP oracles. Our contributions are:
\begin{enumerate}
	\item We examine the applicability of computational learning theory
	to functional synthesis and demonstrate fundamental obstacles that prevent direct extension of these techniques. Specifically, we show
	that even when small Skolem functions exist, naive bit-by-bit
	learning approaches fail due to the relational nature of
	specifications and interdependencies between output variables.

\item We establish fundamental limitations of interpolation-based 
approaches for handling uniquely-defined variables. Using a 
carefully constructed example based on the pigeonhole principle, we 
prove that even when small Skolem functions exist, resolution-based 
interpolation must produce exponential-size circuits. This provides 
theoretical justification for exploring synthesis techniques beyond 
proof-based methods.

\item We prove that access to an NP oracle is inherently necessary 
for efficient synthesis even in restricted settings. This result 
suggests that the reliance on SAT solvers in practical synthesis tools 
is not merely an implementation choice but rather necessary
for achieving scalability.

\item We show that beyond uniquely defined variables, using NP oracles
Skolem functions can be synthesized in time that scales with the size
of the specification and the size of the image of the function. Moreover, 
when the function is unique, we can find a circuit of size that is polynomially 
close to the smallest possible circuit (without dependence on the specification size).
\end{enumerate}

An important question raised by our work is whether practical 
specifications, particularly those arising from program synthesis and 
repair, can be solved with such small images. If so, the 
efficiency of modern tools on these instances could potentially be 
explained by their ability to implicitly leverage this structural feature
through SAT solver-based techniques. This suggests promising directions for future work in 
analyzing the structural properties of real-world benchmarks and 
potentially developing specialized algorithms that explicitly exploit 
the structure when present.

The rest of the paper is organized as follows: we first present 
preliminaries in Section~\ref{sec:prelims}, and then discuss, in 
Section~\ref{sec:background}, state-of-the-art approaches to motivate 
the theoretical model in order to showcase the generality of the 
model studied in this paper. We then describe our results on the 
limitations of existing techniques in Section~\ref{sec:technical} and
on the power granted by NP oracles (and its necessity) in 
Section~\ref{sec:power}, and then conclude in Section~\ref{sec:conclusion}.

\section{Preliminaries}\label{sec:prelims}

Throughout this paper, we use $X = \{X_1, X_2, \ldots, X_n\}$ to refer to the set of $n$ Boolean variables representing inputs and $Y = \{Y_1, Y_2, \ldots, Y_m\}$ to refer to the set of $m$ Boolean variables representing outputs. Given a set $Z$ of Boolean variables, we use $Z^{i:j}$ to denote the set $\{Z_i, Z_{i+1}, \ldots, Z_j\}$.  We use vectors (in lowercase letters) for assignments to the Boolean variables. In particular, $\vec{x}$ and $\vec{y}$ represent assignments to $X$ and $Y$, respectively.  Given a Boolean formula $G$ over a set of variables $Z$, we say that $\vec{z}$ is a satisfying assignment if $G(\vec{z}) = 1$. We use $\satisfying{G}$ to denote the set of satisfying assignments of $G$. 
For a subset of the variables, e.g.\ $X$ with $Y=Z\setminus X$, we denote the set of satisfying assignments to $G$ projected to $X$ by $\satisfying{G}_{\downarrow X}$ (i.e., $\satisfying{G}_{\downarrow X}=\{\vec{x}:\exists \vec{y}\ G(\vec{x},\vec{y})=1\}$ where $\vec{x}$ sets the variables in $X$ and $\vec{y}$ sets the variables in $Y$).

We recalled the problem of functional synthesis in the introduction. Again: given $F(X,Y)$, output Skolem functions
$\Psi = \langle \psi_1, \ldots, \psi_m \rangle$ such that $\exists Y \, F(X,Y) \equiv F(X,\Psi(X))$ where $F(X,\Psi(X))$ represents the substitution of each $Y_i$ with $\psi_i(X)$.
From a representation perspective, it is acceptable to have $\psi_i$ depend on other $Y$ variables, as long as there are no cyclic dependencies, i.e., if $\psi_i$ (the Skolem function for $Y_i$) is defined in terms of $Y_j$, then $\psi_j$ cannot be defined in terms of $Y_i$. We will use the following notion:

\begin{definition} The 
	\emph{image} of $\Psi$, denoted $\mathrm{Im}(\Psi)$, is defined as:
	\begin{align*}
		\mathrm{Im}(\Psi) = \{\Psi(\vec{x}) : \vec{x} \in \{0,1\}^n \text{ and } 
		F(\vec{x}, \Psi(\vec{x})) = 1\}.
	\end{align*}
\end{definition}

Note that we do not require $\forall X \, \exists Y \, F(X,Y)$ to hold, i.e., it is not necessary that for every assignment to $X$, there exists an assignment to $Y$ such that $F(X,Y)$ is satisfied. For example, to model the factorization problem via functional synthesis we can write
\begin{align*}
	F(X,Y) := (X = Y^1 \times Y^2) \wedge (Y^1 \neq 1) \wedge (Y^2 \neq 1)
\end{align*}
where $Y = Y^1 \cup Y^2$ and $\times$ refers to multiplication over bit-vectors. The specification $F(X,Y)$ encodes that $Y^1$ and $Y^2$ are non-trivial factors of $X$. This specification cannot be satisfied when $X$ represents a prime number, but from a practical perspective, we are never interested in factorizing prime numbers. Therefore, we focus on factorization of only those numbers for which non-trivial factors exist.

Given a variable $Y_i$ and a set $Z \subseteq (X \cup Y \setminus \{Y_i\})$, we say that $Y_i$ is \emph{uniquely defined} in terms of $Z$ whenever
\begin{align*}
	F(X,Y) \wedge F(\hat{X},\hat{Y}) \wedge (Z = \hat{Z}) \implies Y_i = \hat{Y_i}
\end{align*}
holds, where $F(\hat{X},\hat{Y})$ refers to the formula $F$ in which $X$ (respectively, $Y$) is replaced with a freshly generated set of variables $\hat{X}$ (respectively, $\hat{Y}$).

\subsection{Resolution Proof Complexity}

Resolution-based techniques form the foundation of our lower bound 
proofs. We begin by presenting the key definitions and prior results that are relevant.

\begin{definition}
	The \emph{width} of a resolution refutation $\pi$, denoted $w(\pi)$, is 
	the maximum number of literals in any clause $\pi_i$ appearing in $\pi$. 
	Similarly, for a CNF formula $\varphi$, we denote its width by 
	$w(\varphi)$. 	The \emph{resolution width complexity} of a CNF formula 
	$\varphi$, denoted $w_{res}(\varphi)$, represents the minimum 
	width $w(\pi)$ across all possible refutations $\pi$ of $\varphi$.
\end{definition}

Ben-Sasson and Wigderson~\shortcite{BW2001} established a key connection between refutation length and width:

\begin{theorem}
	\label{thm:length-width}
	For any unsatisfiable CNF formula $\varphi$,
	$L_{res}(\varphi)\geq 2^{\Omega((w_{res}(\varphi)-w(\varphi))^2/n)}$,
	where $L_{res}(\varphi)$ denotes the minimum length over all resolution refutation of $\varphi$.
\end{theorem}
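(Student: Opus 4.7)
The plan is to prove the contrapositive width-size trade-off: every refutation of $\varphi$ of length at most $L$ can be transformed into one of width at most $w(\varphi) + O(\sqrt{n \ln L})$. Solving for $L$ then gives the stated exponential lower bound on $L_{res}(\varphi)$.

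I would proceed by a restriction-based inductive argument in the style of Ben-Sasson and Wigderson. Fix a width threshold $d$, to be chosen at the end of the analysis. In a given refutation $\pi$, call a clause \emph{fat} if it contains more than $d$ literals and \emph{thin} otherwise. The number of fat clauses in $\pi$ is trivially at most $L$. The main technical lemma, proved by simultaneous induction on $n$ and the number of fat clauses $b$, asserts that any unsatisfiable $\varphi$ over $n$ variables with a refutation containing at most $b$ fat clauses admits a refutation of width at most $\max\{d, w(\varphi)\}$ plus an additive term controlled by a recurrence in $(n, b)$.

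For the inductive step, suppose $b \geq 1$. By averaging, some literal $x$ appears in at least $b/(2n)$ fat clauses of $\pi$. I would consider two restrictions. Under $x := 1$, every fat clause containing $x$ positively is satisfied, so the induced refutation of $\varphi|_{x=1}$ has at most $b(1 - 1/(2n))$ fat clauses; by induction on $n$ it admits a narrow refutation $\pi_1$. Lifting $\pi_1$ back to $\varphi$ by reintroducing $\neg x$ into each clause that had exploited the restriction yields a resolution derivation of the unit clause $\neg x$ from $\varphi$ at a cost of $+1$ in width. Under $x := 0$, the selected fat clauses shrink or disappear, so $\varphi|_{x=0}$ has a refutation with strictly fewer fat clauses; by induction there is a narrow refutation $\pi_0$, which may be regarded as a refutation of $\varphi \wedge \neg x$. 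Concatenating the derivation of $\neg x$ with $\pi_0$ produces a narrow refutation of $\varphi$ whose width is the maximum of the two contributions plus a small constant.

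The main obstacle is setting up the two-parameter recurrence and solving it cleanly enough to extract the $\sqrt{n \ln L}$ savings rather than a weaker $n \ln L$ bound. Unfolding the recurrence using $\ln(1 - 1/(2n)) \approx -1/(2n)$ shows the additive term can be pushed to $O(n \ln L / d)$ when fat clauses are tracked carefully against the threshold, so the total width is at most $d + O(n \ln L / d)$. Balancing the two terms gives an optimum at $d = \Theta(\sqrt{n \ln L})$, producing the width bound $w_{res}(\varphi) \leq w(\varphi) + O(\sqrt{n \ln L})$. Squaring and rearranging this relation then yields $L \geq 2^{\Omega((w_{res}(\varphi) - w(\varphi))^2/n)}$, as claimed.
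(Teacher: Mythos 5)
The paper does not prove this statement: it is quoted verbatim as a known result of Ben-Sasson and Wigderson (2001), so there is no in-paper proof to compare against. Your sketch is a faithful reconstruction of the original Ben-Sasson--Wigderson argument --- the fat/thin clause dichotomy, the double induction on $n$ and the number of fat clauses $b$, the restriction to $x=1$ to kill fat clauses and derive the unit clause $\neg x$, the concatenation with a refutation of $\varphi|_{x=0}$, and the final balancing $d=\Theta(\sqrt{n\ln L})$ --- so the overall structure is correct.

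One quantitative point deserves care, because it is exactly where the square-root savings comes from. You state that some literal appears in at least $b/(2n)$ fat clauses; that is what you get by charging each fat clause a single literal occurrence, and it only yields the recurrence $b\mapsto b(1-1/(2n))$, hence an additive width cost of $\Theta(n\ln L)$ with no dependence on $d$ --- at that point there is nothing to balance and the theorem does not follow. The correct count is that each fat clause contains more than $d$ literals, so the fat clauses contribute more than $bd$ literal occurrences among $2n$ literals, and some literal appears in more than $bd/(2n)$ of them; this gives $b\mapsto b(1-d/(2n))$, hence $O((n/d)\ln L)$ rounds and the additive term $O(n\ln L/d)$ that you invoke in your final paragraph. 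You clearly know this (your closing computation uses the $d$-dependent form), but the averaging step as written supports only the weaker count, so you should state the $bd/(2n)$ bound explicitly where the literal is chosen. With that fix the sketch is the standard proof and the final rearrangement to $L\geq 2^{\Omega((w_{res}(\varphi)-w(\varphi))^2/n)}$ is routine.
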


To establish lower bounds on refutation width, we employ the Pudl\'{a}k-Buss 
\shortcite{PB1995} game framework: 	The \emph{formula game} involves two players: the Prover and the Liar. The Liar claims that formula $\varphi$ is satisfiable, while the Prover 
	attempts to demonstrate unsatisfiability. The game proceeds in rounds, 
	with the Prover selecting a variable each round and challenging the Liar 
	to assign it a value. The Prover maintains a record of previous 
	assignments, and may selectively forget them. The Prover's variable 
	selection depends solely on the current round and its memory contents. 
	The Prover wins if the remembered partial assignment $\vec{\rho}$ 
	falsifies any clause in $\varphi$. The Liar wins by maintaining a 
	strategy avoiding such falsifying assignments.

		The connection between game strategies and resolution width is captured by 
		the following theorem:
		
\begin{theorem} {\bf \cite{PB1995,P2000}}
			\label{thm:width-game}
			If a resolution refutation of $\varphi$ with width $w$ exists then
			the Prover has a winning strategy while remembering at most $w+1$ 
			variables simultaneously. Conversely, if the Prover can win while 
			remembering at most $w$ variables, there exists a resolution refutation 
			of width $w$.
\end{theorem}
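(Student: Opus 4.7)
The plan is to set up a correspondence between game states and clauses: a Prover memory holding the partial assignment $\rho$ is identified with the clause $\overline{\rho}$ whose literals are the negations of the assignments in $\rho$. Then $|\overline{\rho}|$ equals the number of variables the Prover is remembering, and any assignment extending $\rho$ falsifies $\overline{\rho}$ and every clause that $\overline{\rho}$ weakens to.

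For the forward direction, given a width-$w$ refutation $\pi = C_1,\ldots,C_s$ with $C_s = \bot$, the Prover walks backward through $\pi$ while maintaining the invariant that the current memory $\rho$ falsifies the currently visited clause $C$. I would initialize with $\rho = \emptyset$ at $C_s = \bot$, which is vacuously falsified. If $C = A' \vee B'$ was derived by resolving $A = A' \vee x$ against $B = B' \vee \neg x$, the Prover queries $x$; whichever bit the Liar answers, the extended assignment falsifies one of $A$ or $B$, so the Prover moves its pointer to that clause and then forgets every variable not appearing in it. Since every clause in $\pi$ has width at most $w$, the memory size is bounded by $w$ between query-forget cycles and by $w+1$ during a query. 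The walk terminates at an original axiom of $\varphi$ that $\rho$ falsifies, so the Prover wins.

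For the converse direction, suppose the Prover has a winning strategy using at most $w$ memory. Unfolding the strategy against every possible Liar response yields a finite tree of reachable memory states, rooted at the empty initial memory, whose leaves are states in which some axiom of $\varphi$ is falsified. The plan is to label each state $\rho$ with the clause $\overline{\rho}$, of width at most $w$, and to derive these clauses bottom-up. At a leaf, $\overline{\rho}$ is a weakening of the falsified axiom, because every literal of the axiom, being falsified by $\rho$, appears in $\overline{\rho}$. At a query state on variable $x$ with children $\rho \cup \{x = 0\}$ and $\rho \cup \{x = 1\}$, the child clauses are $\overline{\rho} \vee x$ and $\overline{\rho} \vee \neg x$, which resolve on $x$ to give exactly $\overline{\rho}$. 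At a forget state, the child's clause is a superclause of $\overline{\rho}$. The root yields the empty clause, and by construction every clause in the derivation has width at most $w$.

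The main subtlety is the forget move in the converse direction, since it does not directly correspond to a resolution step. The cleanest fix is to observe that when the child's clause subsumes $\overline{\rho}$, any derivation that uses $\overline{\rho}$ can be rerouted to use the stronger child clause, so forget nodes can simply be discarded while preserving the width bound; equivalently, one passes through resolution with weakening and then converts to a standard width-$w$ refutation by a topological walk that drops redundant literals. A secondary point requiring care is the tight $w$ versus $w+1$ accounting: the extra unit in the forward direction is needed because the Prover holds both the freshly queried variable and the memory inherited from the resolvent $C$ simultaneously, but only for the single round during which the Liar answers.
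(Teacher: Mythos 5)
The paper states Theorem~\ref{thm:width-game} as a known result with citations to Pudl\'{a}k--Buss and Pudl\'{a}k and gives no proof of its own, so there is no in-paper argument to compare against; your proposal is the standard proof from those references and is correct. Both directions check out --- the backward walk maintaining the falsification invariant yields the $w+1$ bound, and labelling the (finite, by K\"onig's lemma) game tree with the clauses $\overline{\rho}$ and then eliminating the weakening steps introduced by forget moves yields the converse --- with the one point worth making explicit being that the post-answer memory $\rho\cup\{x=b\}$ at a query node must itself have size at most $w$, which is precisely what bounds the width of the child clauses $\overline{\rho}\vee x$ and $\overline{\rho}\vee\neg x$ by $w$.
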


\section{Background}\label{sec:background}
We now recall the necessary theoretical foundations for our 
work. We begin by reviewing functional synthesis, focusing on key 
computational complexity results and examining the major paradigms 
that have emerged in state-of-the-art approaches. We then review
relevant concepts from computational learning theory, particularly 
mistake-bounded learning, which provides crucial insights for our 
theoretical analysis of synthesis techniques that leverage NP oracles.

\subsection{Functional Synthesis}

These state of the art approaches for functional synthesis seek to harness the power of SAT solvers.  
We can largely classify these approaches into  four paradigms: {\em proof-based techniques},  {\em knowledge-compilation-based},  {\em guess-check-repair}, and {\em incremental determinization}. 
The {\em proof-based techniques} primarily focus on the specification $F(X,Y)$ for which $\forall Y \exists X F(X,Y)$ is true~\cite{BJ12,NPLSB12,HSB14,RT15,BJJW15,SSWZ20}. In such cases, the Skolem functions for the corresponding $y_i$ can be derived from the proof of validity. To generate the proof of validity for $\forall Y \exists X F(X,Y)$, these techniques rely on SAT-based Quantified Boolean Formula (QBF) solvers that extend conflict-driven clause learning techniques to QBF settings.

The {\em guess-check-repair} paradigm traces its roots to early efforts
 that were inspired by the success of CEGAR (Counter-Example Guided Abstraction Refinement) approaches in formal verification~\cite{JSCTA15,ACJS17,ACGKS18}. The underlying idea is to {\em guess} with candidate functions.  John et al.~\shortcite{JSCTA15} observed that  for a given relational specification $F(X,Y)$, checking whether $\hat{\Psi}$ is a Skolem function reduces to satisfiability of the following formula, referred to as the Error Formula:
\begin{align*}
	E(X,Y,Y') := F(X,Y) \wedge \neg F(X,Y') \wedge (Y' \leftrightarrow \hat{\Psi}(X))
\end{align*}

Observe that $E(X,Y,Y')$ is unsatisfiable if and only if $\hat{\Psi}$ is a Skolem function vector. Furthermore, if $E(X,Y,Y')$ is satisfiable, then the satisfying assignment $\sigma$ of $E$ represents a {\em counterexample}, which necessitates repair. Ideally, we would like the repair to {\em generalize}, i.e., also fix other potential counterexamples. To this end, the state-of-the-art techniques rely on UNSAT cores to construct sound repairs that generalize. The check-repair loop continues until the error formula $E(X,Y,Y')$ becomes unsatisfiable. The use of data-driven approaches for the {\em guess} step has led to significant scalability gains, as evidenced by the performance of the state-of-the-art synthesis engine, $\mathsf{manthan}$~\cite{GRM20,GSRM21}.

The compilation-based paradigm relies on the observation that the complexity of functional synthesis techniques depends on the representation of $F(X,Y)$~\cite{acs23}. The earliest works were inspired by the observation that when $F$ is represented as an Ordered Binary Decision Diagram (OBDD)~\cite{JSCTA15,FTV16}, the Skolem functions for $Y_i$ can be synthesized in polynomial time. Subsequent work sought to identify a broader class of representations that would allow for the polynomial-time synthesis of Skolem functions~\cite{ACGKS18,ACGKS21}. 

The {\em incremental determinization} approach is based on the observation that it is easy to extract Skolem functions for variables that are uniquely defined and accordingly iteratively adds additional clauses to $F(X,Y)$ so that every variable is uniquely defined~\cite{RS16,RTRS18}. 
 The state-of-the-art method, \textsc{Cadet}, relies on lifting the conflict-driven clause learning framework proposed in the context of SAT solving to functional synthesis, which entails many invocations of the SAT solver.

To summarize, these diverse approaches have  enabled significant advances in the scalability of state-of-the-art methods, as discussed in the introduction.
Theoretical studies in the context of functional synthesis, meanwhile, have primarily focused on the hardness of the problem. Such studies do little to explain the impressive advances achieved over the past decade. In particular, a natural question from a practitioner's perspective is to understand the power of frameworks that rely on SAT solvers. We seek to address this gap.

\subsection{Computational Learning Theory}

Our work draws from work on machine learning in the
\emph{mistake-bounded} learning model first introduced by B\={a}rzdi\c{n}\v{s} 
and Freivalds~\shortcite{BF72} (although such a model was also used in the
analysis of the Perceptron~\cite{Rosenblatt1958}). Whereas most familiar models of learning 
such as PAC learning~\cite{Valiant1984} are statistical, mistake-bounded 
learning is a worst-case model that  demands that the learner learn to
correctly evaluate the target function on all inputs. Because of this crucial
difference, the model is relevant to the synthesis task, as we will discuss in
more detail later.

In the mistake-bounded model, a learning problem is given by
a class of Boolean functions $\mathcal{C}$ mapping from $\{0,1\}^n$ to $\{0,1\}^m$. 
In an instance of the problem, some function $c^* \in \mathcal{C}$ 
is fixed, followed by an interaction between an algorithm, the \emph{learner}, 
and an oracle, the \emph{environment}. Each round proceeds as follows:
\begin{compactenum}
	\item The environment chooses $\vec{x} \in \{0,1\}^n$ arbitrarily and provides it to the learner as input.
	\item The learner, based on its previous state and the environment's chosen input, provides a \emph{prediction} $\vec{y} \in \{0,1\}^m$ to the environment.
	\item The environment provides $c^*(\vec{x})$ to the learner, which chooses a state for the next round.
\end{compactenum}
If $\vec{y} \neq c^*(\vec{x})$, then we say that the learner has made a \emph{mistake}.
An algorithm has a \emph{mistake bound} $M$ if for all $c \in \mathcal{C}$ and all
possible infinite sequences of inputs the environment may choose, the learner
makes at most $M$ mistakes; hence, for all sequences, there is some finite
round $t$ after which the learner correctly predicts $\vec{y} = c^*(\vec{x})$. Typically
we are interested in \emph{conservative} learners, for which the learner's
state only changes following a mistake (and otherwise, the learner uses an
identical state for the subsequent round). Note that once the learner's state
is fixed, it computes a fixed function $c: \{0,1\}^n \to \{0,1\}^m$, and there 
is a circuit that makes the same predictions as the learner's algorithm on this state.

B\={a}rzdi\c{n}\v{s} and Freivalds~\shortcite{BF72} also introduced the \emph{halving} strategy for Boolean functions, for which the learner predicts
according to a majority vote of all functions $c \in \mathcal{C}$ that satisfy
$c(\vec{x}) = c^*(\vec{x})$ for all $\vec{x}$ observed on previous rounds. They
observed that this method obtains a mistake bound of $\log|\mathcal{C}|$, but
they did not consider computational aspects of the model. By contrast, 
Littlestone~\shortcite{Littlestone1988} introduced such considerations, and gave a 
polynomial-time algorithm for learning halfspaces with polynomially-bounded 
coefficients. (He also introduced methods for analyzing mistake bounds.) Angluin~\shortcite{Angluin1988} later observed that the mistake-bounded model was equivalent 
to a model in which the learner has a \emph{counterexample} oracle: here, again, 
a function $c^* \in \mathcal{C}$ is fixed, and the computation proceeds as follows:
\begin{compactenum}
	\item The learner sends a representation of some $c: \{0,1\}^n \to \{0,1\}^m$ to the oracle.
	\item The oracle either indicates that $c(\vec{x}) = c^*(\vec{x})$ for all $\vec{x} \in \{0,1\}^n$ (learning is successful) or else provides a \emph{counterexample} $\vec{x}^*$ such that $c(\vec{x}^*) \neq c^*(\vec{x}^*)$, chosen arbitrarily.
\end{compactenum}
We here measure the number of rounds before the learner's function is correct.
It is easy to see that the learner here can send a representation of a mistake-bounded learner's function on its current state and obtain a next input on which the mistake-bounded learner errs, so the number of rounds of interaction here is at most the mistake bound. Conversely, a mistake-bounded learner can simulate the oracle by continuing to predict according to the function $c$ provided by a learner in the oracle model. In this form, the connection to counterexample-guided synthesis is more apparent.

 Bshouty et al.~\shortcite{BCGKT1996}  showed how to simulate the halving method of 
B\={a}rzdi\c{n}\v{s} and Freivalds~\shortcite{BF72} using the aid of an NP oracle,
to learn Boolean circuits. This is the starting point for our work, and we will
review it in detail next. It is worth emphasizing the two key differences between synthesis problem and  the mistake-bounded learning model: Firstly, we
are given a formula that captures the behavior of the oracle, whereas in
mistake-bounded learning the oracle is a black box to the learner. Secondly, we are particularly interested in functions with more than a
single bit of output, where it is not immediately obvious how to generalize the
majority vote strategy.

\section{Limitations of Existing Approaches}\label{sec:technical}

In this section, we focus on analyzing limitations of existing approaches. To this end, we demonstrate inherent limitations for two classes of techniques: synthesizing output variables one variable at a time and interpolation-based approaches. 
\subsection{Limitations of Sequential Algorithms }
\label{subsec:learning}

A natural approach to functional synthesis is synthesizing Skolem functions for 
individual output variables sequentially. This strategy appears promising 
given the success of computational learning theory techniques for single-output 
Boolean circuits. However, we demonstrate that such bit-by-bit approaches 
fundamentally fail for multi-output synthesis, even when small Skolem 
functions are guaranteed to exist.

\begin{definition}
	Let $\mathcal{C}$ be a set of Boolean circuits with $n$ inputs 
	$X_1,\ldots,X_n$ and $m$ outputs, and let $\delta\in (0,1/2]$. We 
	say that a function $h:\{0,1\}^n\to\{0,1\}^m$ is 
	\emph{$\delta$-good} for $c\in\mathcal{C}$ if for any 
	$x\in\{0,1\}^n$ such that $h(x)\neq c(x)$ (a \emph{counterexample} 
	to ``$h=c$''), 
	$|\{g\in\mathcal{C}:g(x)\neq c(x)\}|\geq \delta |\mathcal{C}|.$
\end{definition}

For circuits with a single output ($m=1$), Bshouty et al.~\shortcite{BCGKT1996} proved that 
sampling $O(n)$ circuits approximately uniformly at random from those 
consistent with prior counterexamples and using a majority vote suffices to 
construct a good hypothesis. This approach yields polynomial-time learning 
algorithms for Boolean circuits when combined with NP oracle access.

The core difficulty lies in the relational nature of functional synthesis 
specifications. Unlike classical function learning where each input has a 
unique corresponding output, relational specifications permit multiple valid 
outputs for a given input. When multiple outputs are valid, the choice of 
assignment for early variables determines the complexity of synthesizing 
functions for later variables.

\begin{definition}
	A \emph{natural sequential synthesis algorithm} constructs Skolem functions for 
	$Y_1, \ldots, Y_m$ in order, where for each $Y_i$, it samples candidate 
	functions uniformly at random from those consistent with prior counterexamples 
	and uses majority vote to select the hypothesis, following the approach of 
	Bshouty et al.
\end{definition}

\begin{theorem}\label{thm:sequential-failure}
	There exists a family of relational specifications $\{R_m\}_{m \geq 4}$ such that 
	\begin{enumerate}
		\item $R_m$ has Skolem functions computable by circuits of size $O(nm)$
		\item Any natural sequential synthesis algorithm that constructs Skolem functions for 
		$Y_1, \ldots, Y_m$ requires circuits of size $2^{\Omega(m)}$ with 
		probability $1 - 2^{-\Omega(m)}$
	\end{enumerate}
\end{theorem}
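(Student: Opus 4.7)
The plan is to construct an explicit family $\{R_m\}$ with small Skolem functions but that forces any natural sequential algorithm to produce $2^{\Omega(m)}$-size circuits. I would design $R_m(X, Y)$ over $n = \Theta(m)$ inputs as a disjunction of two \emph{branches}. The \emph{simple} branch fixes $(Y_1, \ldots, Y_m) = (0, X_2, \ldots, X_m)$; this branch alone provides the explicit Skolem $\Psi^* = (0, X_2, \ldots, X_m)$ of size $O(nm)$, settling condition~1. The \emph{hard} branch fixes $Y_1 = 1$ and, for every $X$, uniquely determines $(Y_2, \ldots, Y_m)$ to be a specific combinatorial witness whose Skolem extractor provably requires circuits of size $2^{\Omega(m)}$ on any constant-fraction subset of inputs. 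Crucially, the hard branch is satisfiable for every $X$, so the spec never restricts the value of $Y_1$. The overall CNF has size polynomial in $m$.

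Next I would analyze the $Y_1$ phase. Since both branches are satisfiable for each $X$, every value of $\hat{\psi}_1(x)$ is extendable, so the SAT-based counterexample oracle returns no counterexamples, and $\hat{\psi}_1$ is the majority vote of $O(n)$ uniform samples from the hypothesis class, without any pruning from the specification. By a Chernoff bound applied pointwise together with a union bound over inputs, with probability $1 - 2^{-\Omega(m)}$ the set $S := \{x : \hat{\psi}_1(x) = 1\}$ has density $\Omega(1)$ in $\{0,1\}^n$; intuitively, $\psi^*_1$ (the constant zero function) is a single fixed function in a huge class, so a majority vote of random samples is unlikely to coincide with it on most of the input space.

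For the circuit lower bound, take any $\hat{\psi}_i$ with $i \geq 2$ output by the algorithm. On $S$, the induced specification $R_m(X, \hat{\psi}_1(X), Y_2, \ldots, Y_m)$ admits only hard-branch witnesses, so $\hat{\psi}_i$ restricted to $S$ must equal the hard-branch Skolem restricted to $S$. A standard restriction argument then yields that the circuit size of $\hat{\psi}_i$ is at least the circuit complexity of this restriction, which is $2^{\Omega(m)}$ by the gadget's design.

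The main obstacle is the design of this gadget: a polynomial-size CNF that is satisfiable on every $X$, yet whose Skolem extractor on $(Y_2, \ldots, Y_m)$ is provably $2^{\Omega(m)}$-hard on any constant-fraction input set. This is the technical crux, and the natural candidate is a combinatorial gadget in the spirit of the pigeonhole-based construction used elsewhere in the paper for the interpolation lower bound, where combinatorial hardness can be lifted to an unconditional circuit lower bound on the induced Skolem. The remaining steps (existence of $\Psi^*$, absence of counterexamples, and the concentration bound on $|S|$) are comparatively routine; it is the construction and its restricted-Skolem lower bound that carry the argument.
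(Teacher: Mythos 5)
Your high-level plan---a specification with an ``easy'' branch yielding the small Skolem function and a ``hard'' branch that the unconstrained early variables are likely to commit to---is the same idea as the paper's. But there are two genuine gaps. The main one is the gadget you defer as the ``technical crux'': you require an \emph{explicit, polynomial-size} CNF whose hard branch uniquely determines, for every $X$, a witness whose restriction to any constant-density input set has circuit complexity $2^{\Omega(m)}$. That is an unconditional exponential circuit lower bound for an explicitly defined function, far beyond current techniques; and the pigeonhole construction you point to does not supply it---in the paper that construction lower-bounds the size of circuits \emph{extracted by resolution interpolation}, not the size of all circuits computing the Skolem function (indeed its Skolem functions are only $\tilde{O}(n^4)$). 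The paper sidesteps this entirely: the theorem never requires $R_m$ to have a small representation, so the proof simply takes the hard completion $h$ to be an arbitrary function requiring $2^{\Omega(m)}$-size circuits, whose existence follows from counting. By insisting on a poly-size CNF you have made the problem strictly harder than the statement demands and left the hard part unproved.

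The second gap is your single switch bit $Y_1$. Since $Y_1$ is completely unconstrained, you need the majority vote of roughly uniformly sampled consistent circuits to output $1$ on a constant fraction of inputs with probability $1-2^{-\Omega(m)}$; but nothing controls the bias of a uniformly random size-$s$ circuit at a given input, and if that distribution is skewed toward $0$ your set $S$ can be empty, collapsing the argument. The paper avoids this by using a block of $m/2$ switch variables together with a secret string $\vec{s}\in\{0,1\}^{m/2}$: only the exact value $\vec{s}$ leads to the easy completion $c(\vec{x})$, every other prefix $\vec{y}$ forces the hard completion $h(\vec{x},\vec{y})$, and since all $2^{m/2}$ prefixes look identical to the sequential learner, it lands on $\vec{s}$ with probability only $2^{-m/2}$. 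That symmetry argument is what delivers the $1-2^{-\Omega(m)}$ bound; a one-bit switch cannot.
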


\begin{proof}
	We construct $R_m$ where $Y$ is partitioned into two blocks of size $m/2$ 
	each. Let $\vec{s} \in \{0,1\}^{m/2}$ be a fixed string, and define functions 
	$c:\{0,1\}^n \to \{0,1\}^{m/2}$ computable by circuits of size $O(n)$ and 
	$h:\{0,1\}^n \times \{0,1\}^{m/2} \to \{0,1\}^{m/2}$ that requires circuits 
	of size $2^{\Omega(m)}$. Set
	\begin{align*}
		R_m = &\{(\vec{x},(\vec{s},c(\vec{x}))):\vec{x}\in\{0,1\}^n\}\cup\\
		&\{(\vec{x},(\vec{y},h(\vec{x},\vec{y}))):\vec{x}\in\{0,1\}^n,
		\vec{y}\in\{0,1\}^{m/2}\setminus\{\vec{s}\}\}
	\end{align*}

\noindent	
	First, we verify that $R_m$ has small Skolem functions. Define:
	\begin{align*}
		\psi_i(X) &= s_i \text{ for } i = 1, \ldots, m/2\\
		\psi_i(X) &= c_{i-m/2}(X) \text{ for } i = m/2+1, \ldots, m
	\end{align*}
	These functions have circuit size $O(nm)$ and satisfy $R_m$.
	
	Now consider any natural sequential synthesis algorithm $\mathcal{A}$ that constructs 
	functions for $Y_1, \ldots, Y_{m/2}$ before constructing functions for 
	$Y_{m/2+1}, \ldots, Y_m$. When $\mathcal{A}$ synthesizes the first $m/2$ 
	variables, every assignment appears valid since the relation guarantees some 
	completion exists.
	
	Since $\mathcal{A}$ follows Bshouty et al.'s approach, it samples candidate 
	functions uniformly at random from those consistent with prior counterexamples 
	and uses majority vote. For the first $m/2$ variables, all possible assignments 
	are consistent with the specification (as each has a valid completion), so the 
	majority vote over uniformly sampled candidates will select assignment 
	$\vec{t} \in \{0,1\}^{m/2}$ with probability $2^{-m/2}$ for $\vec{t} = \vec{s}$ 
	and probability $1 - 2^{-m/2}$ for $\vec{t} \neq \vec{s}$.
	
	If $\vec{t} \neq \vec{s}$, then for the remaining variables $Y_{m/2+1}, \ldots, Y_m$, 
	the algorithm must output functions that compute $h(\vec{x}, \vec{t})$. By 
	construction of $h$, this requires circuits of size $2^{\Omega(m)}$.
	Therefore, with probability $1 - 2^{-m/2}$, any natural sequential synthesis 
	algorithm produces Skolem functions requiring exponential circuit size.
\end{proof}

This theorem demonstrates that natural sequential synthesis approaches fail even when 
small Skolem functions exist. The fundamental issue is that local decisions 
made early in the synthesis process can render later synthesis steps 
exponentially difficult, despite the existence of a globally optimal solution 
with small circuits.

\subsection{Limitations of Interpolation-Based Approaches}

While interpolation-based methods have proven successful for 
certain classes of specifications, we demonstrate fundamental 
limitations of this approach. We present a simple example based on 
the pigeonhole principle where small Skolem functions exist, yet 
resolution-based interpolation necessarily produces 
exponential-size circuits. Our example will not be one with unique
Skolem functions, but recall that our primary interest here is to what extent 
we can extend beyond unique Skolem synthesis.

Recall that resolution is a logic on the language
of clauses, where there is a single rule of inference that allows
inferring a clause $C\lor D$ from clauses $x\lor C$ and $\neg x\lor D$.
Slivovsky's interpolation-based method for unique synthesis \cite{F20} relies on
feasible interpolation (essentially formulated by Kraj{\'\i}{\v{c}}ek \shortcite{K1994}):
\begin{definition}
We say that a proof system has \emph{feasible interpolation} if there is a polynomial $p$ such that for any pair of formulas $\varphi_0(A,C)$ and $\varphi_1(B,C)$ on $n$ variables $A,B,C$ such that $\varphi_0(A,C)\land \varphi_1(B,C)$ has a refutation $\pi$ of size $s$, there is a circuit $I(C)$ of size $p(n,s)$ called an \emph{interpolant} such that for any assignment $\vec{a},\vec{b},\vec{c}$, if $I(\vec{c})=0$ $\phi_0(\vec{a},\vec{c})$ is false and if  $I(\vec{c})=1$, $\varphi_1(\vec{b},\vec{c})$ is false.
\end{definition}
Slivovsky uses the feasible interpolation of resolution to obtain a circuit for the $i$th bit of $Y$ given circuits for $i+1,\ldots,n$ as follows: we take $A$ and $B$ to be $Y^{i+1:n}$, $C$ to be $X,Y^{1:i-1}$, and both formulas are constructed from $\neg F$ with $Y^{i+1:n}$ given by our previously constructed Skolem functions. $\varphi_0$ then additionally fixes $Y_i$ to $0$ and $\varphi_1$ fixes $Y_i$ to $1$. Then the interpolant circuit for $Y_i$ gives a setting such 
that when $X=\vec{x}$ and $Y^{1:i-1}=\vec{y}^{1:i-1}$, fixing $Y_i=I(\vec{x},\vec{y}^{1:i-1})$ gives $\neg F$ is false---i.e., $F$ is satisfied.
The size of the smallest resolution refutation is a lower
bound on the size of the circuit constructed by the feasible
interpolation construction. Thus, we can show an exponential lower bound
on the size of the circuit we obtain by showing an exponential lower
bound on the size of the smallest resolution refutation of the pair 
of formulas constructed by Slivovsky's method.

To prove the lower bound, we consider the following version of the
pigeonhole principle~\cite{FLN+2015} $\text{bPHP}^k_n$ where $n=km$ for $m=\lceil\log_2n
\rceil-1$:
\begin{definition}
	Let $X$ represent $k$ blocks of $m$ bits (hole addresses), denoted $X_{i,j}$ for
	$i=1,\ldots,k$, $j=1,\ldots,m$. Using the notation $\ell_{0}(X)=\neg X$
	and $\ell_1(X)=X$, define:
	\begin{equation*}
\small
	F(X,Y)=\qquad\mathclap{\bigvee_{\substack{\vec{b}\in\{0,1\}^m,\\ 
				1 \leq i_1 < i_2 \leq k
	}}}\qquad
	\bigwedge_{j=1}^m\ell_{\vec{b}_j}(X_{i_1,j})\land \ell_{\vec{b}_j}(X_{i_2,j})\land
	\ell_{\vec{b}_j}(Y_j).
\end{equation*}
\end{definition}
\noindent
Intuitively, assignments to $X$ specify assignments to one of $2^m$ holes for each of our $k$ pigeons, and in satisfying assignments, $Y$ indicates a hole containing at least two pigeons.

\begin{theorem}\label{thm:php-width}
	Any resolution refutation of Slivovsky's first interpolation formula for
	$\text{bPHP}^k_n$ (where $m <\log_2 n$) has width at least $2^m-1$.
\end{theorem}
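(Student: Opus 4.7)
The plan is to invoke the Pudl\'ak--Buss game characterization (Theorem~\ref{thm:width-game}): it suffices to give a strategy for the Liar that survives against any Prover who simultaneously remembers at most $2^m - 1$ variables. By that theorem a refutation of width $w$ yields a winning Prover strategy using $w+1$ variables of memory, and so such a Liar rules out every refutation of width at most $2^m - 2$, establishing the claimed bound.

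The Liar maintains internally a partial injection $h : \{1, \ldots, k\} \to \{0,1\}^m$ assigning each currently ``committed'' pigeon to a hole, together with an arbitrary consistent partial commitment to the $Y$-side variables occurring in Slivovsky's formula (the shared $Y^{1:m-1}$, or the two copies of $Y^{i+1:m}$ if they are present). The governing invariant is simply that $h$ is injective, so no two committed pigeons share a hole. To answer a query $X_{i,j}$: if $h(i) = \vec b$ is already defined, respond $\vec b_j$; otherwise pick any hole $\vec b$ not currently in the image of $h$, set $h(i) := \vec b$, and respond $\vec b_j$. Queries on $Y$-side variables are answered consistently with a commitment freshly initialized on the first such query. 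When the Prover forgets the last remembered bit of a pigeon $i$, the Liar uncommits $h(i)$ and returns that hole to the free pool; $Y$-side commitments are dropped analogously once all their bits leave $R$.

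The counting step is immediate: each committed pigeon owns at least one of its $m$ bits currently in the Prover's memory $R$, and distinct pigeons own disjoint variable sets, so the number of committed pigeons is bounded above by the number of $X$-variables in $R$, hence by $|R| \leq 2^m - 1 < 2^m$. The $Y$-side commitments do not consume holes (they merely fix prefixes or values), so a fresh hole is always available whenever the Liar needs to commit a newly queried pigeon. Finally, by inspection of Slivovsky's interpolation formula, each of its clauses is indexed by a triple $(\vec b, i_1, i_2)$ with $i_1 \neq i_2$, and such a clause is falsified by the Prover's partial assignment $\sigma$ only if $\sigma$ forces both pigeons $i_1, i_2$ into the same hole $\vec b$ and forces the relevant $Y$-side bits to identify $\vec b$; the injectivity of $h$ precludes the first of these conditions, so no clause is ever falsified and the Liar survives indefinitely.

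The main obstacle in fleshing out the argument is the bookkeeping around the Prover's freedom to interleave forgets and re-queries: one must verify that uncommitting a pigeon, and possibly later re-committing it to a different hole, never briefly falsifies a clause during the transition. This holds because once all of a pigeon's bits are absent from $R$, it contributes no literals to $\sigma$, so any clause mentioning it becomes undetermined rather than falsified; the injectivity invariant is then trivially preserved when the pigeon is re-committed to a fresh hole.
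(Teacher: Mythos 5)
Your proposal is correct and follows essentially the same route as the paper: a Liar strategy in the Pudl\'ak--Buss game that maintains an injective assignment of holes to the pigeons currently mentioned in the Prover's memory, with the observation that at most $2^m-1$ remembered variables can pin down at most $2^m-1$ pigeons, leaving a free hole, and that every clause of the interpolation formula requires a pigeon collision to be falsified. The paper's version is terser (it sets the $Y$ variables arbitrarily and does not spell out the re-commitment bookkeeping), but the argument is the same.
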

\begin{proof}
Fix a strategy for the Prover in which the Prover remembers fewer than $2^m$ variables in any state. We say that the partial assignment \emph{mentions} a pigeon $i$ if it includes some variable $X_{i,j}$ for some $j$.  The Liar can now win using the following strategy. 
Inductively, the Liar will remember a set of distinct hole assignments for the pigeons mentioned in the Prover's current memory; when the Prover forgets all of the variables for a given pigeon, the Liar also forgets the assignment chosen for that pigeon. This allows the Liar to continue to maintain this assignment and answer the Prover's queries:

Initially, the Prover's memory is empty and so the empty set of assignments suffices. Whenever the Prover queries a new variable, if it is for a pigeon that is mentioned in the Prover's current memory, the Liar answers with the corresponding bit of the assignment currently chosen for that pigeon. Otherwise, for a new pigeon, since the Prover's state includes at most $2^m-1$ variables, the Liar currently has assigned at most $2^m-1$ holes, so the Liar can choose a new index distinct from those chosen for the currently mentioned pigeons. 
(The $Y$ variables may be set arbitrarily.)

Now, since at any point the Prover's memory contains assignments consistent with the Liar's chosen set of distinct holes for the mentioned pigeons, none of the clauses of $\text{bPHP}^k_n$ are falsified. Since this first formula only includes clauses from $\text{bPHP}^k_n$, the Liar wins. Since by Theorem~\ref{thm:width-game}, a width complexity of $2^m-2$ would give the Prover a winning strategy remembering fewer than $2^m$ variables, the width complexity of the formula is at least $2^m-1$.
\end{proof}

We are now ready to state our lower bound:

\begin{theorem}\label{thm:php-hard}
	There exists a formula family with Skolem functions of size
	$\tilde{O}(n^4)$ where resolution-based interpolation produces circuits
	of size at least $2^{\Omega(n/\log^2 n)}$.
\end{theorem}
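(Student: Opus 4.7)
The plan is to combine the width lower bound from Theorem~\ref{thm:php-width} with the Ben-Sasson--Wigderson size-width tradeoff (Theorem~\ref{thm:length-width}) to obtain an exponential lower bound on the length of any resolution refutation of Slivovsky's first interpolation formula for $\text{bPHP}^k_n$. By the observation in the paragraph preceding the statement, the shortest refutation length lower-bounds the size of the circuit the feasible-interpolation construction produces, which delivers the desired circuit lower bound. In parallel, I would exhibit explicit Skolem functions of size $\tilde O(n^4)$ for the same family so that the specification itself admits a small solution.

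For the Skolem-function upper bound, I would give the natural ``collision search'' circuit: for each of the $\binom{k}{2}$ ordered pairs of pigeon blocks $(i_1,i_2)$, test bitwise equality of their two $m$-bit hole addresses using $O(m)$ gates; a priority encoder over the resulting $\binom{k}{2}$ equality signals then outputs the $m$-bit address of the first matching pair as $Y$. Correctness on all inputs follows from $k>2^m$ together with the pigeonhole principle. With $k=n/m$ and $m=O(\log n)$, the total circuit size is $O(k^2 m)=\tilde O(n^2)$, well within the $\tilde O(n^4)$ target.

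For the lower bound I would take $m=\lceil\log_2 n\rceil-1$ so that $2^m=\Theta(n)$ while $k=n/m$ still exceeds $2^m$ (ensuring $F$ has a witness for every $X$, and hence that $\varphi_0\wedge\varphi_1$ is genuinely unsatisfiable). Slivovsky's first interpolation formula is then just $\neg F$ with $Y_m$ fixed to $0$ in $\varphi_0$ and to $1$ in $\varphi_1$ (there are no previously-Skolemized $Y$-variables to substitute at this first stage), so its total variable count is $N=\Theta(n)$ and its initial clause width is $w(\varphi)=3m=O(\log n)$. Theorem~\ref{thm:php-width} supplies $w_{res}(\varphi)\geq 2^m-1=\Omega(n)$, and substituting into Theorem~\ref{thm:length-width} gives
\[ L_{res}(\varphi)\geq 2^{\Omega((w_{res}-w(\varphi))^2/N)}=2^{\Omega(n)}\geq 2^{\Omega(n/\log^2 n)}, \]
which by the interpolation observation lower-bounds the circuit Slivovsky's construction produces for the bit $Y_m$.

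The main obstacle I anticipate is pinning down the precise form of Slivovsky's ``first interpolation formula'' so as to confirm that $N$ really is $\Theta(n)$ and that the initial clause width is only $O(\log n)$---in particular that fixing $Y_m$ introduces no Tseitin auxiliaries that would inflate either quantity. A secondary point is that since the interpolation construction is carried out iteratively over the $Y_i$ and we only need one of the produced circuits to be exponentially large, it suffices to analyze the very first stage, where there are no previously-synthesized Skolem functions whose encoding might dominate the formula.
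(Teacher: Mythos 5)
Your proposal follows essentially the same route as the paper's proof: an explicit collision-finding circuit (lexicographic first collision / priority encoder over pigeon pairs) for the $\tilde{O}(n^4)$ Skolem upper bound, and the same chain for the lower bound, namely the width bound of Theorem~\ref{thm:php-width}, fed into Theorem~\ref{thm:length-width} with $w(\varphi)=3m=O(\log n)$ and $N=\Theta(n)$ variables, and then the observation that refutation length lower-bounds the size of the interpolant circuit. One slip worth flagging: with the paper's parameters $m=\lceil\log_2 n\rceil-1$ and $k=n/m$, we have $2^m\geq n/2$ while $k\approx n/\log n$, so $k$ does \emph{not} exceed $2^m$; your parenthetical justification that $\varphi_0\wedge\varphi_1$ is ``genuinely unsatisfiable'' by pigeonhole therefore does not hold as stated (the paper glosses over this point as well, so this is not a divergence from its argument, but the claim as written is false and the unsatisfiability of the interpolation pair would need a different justification or parameter choice).
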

\begin{proof}
	First, we show that $\text{bPHP}^k_n$ has small Skolem functions. Observe,
	a Skolem function may be computed by setting $Y_i$ as in the lexicographically
	first collision, i.e.,
	\begin{align*}
		f_i(X)=&\bigvee_{{\substack{\vec{b}\in\{0,1\}^m\\\vec{b}_i=1}}}
		\left(
		\bigwedge_{j=1}^m\ell_{\vec{b}_j}(X_{i_1,j})\land \ell_{\vec{b}_j}(X_{i_2,j})\qquad\land \right.\\
		&\left.\bigwedge_{{\substack{\vec{b}'\in\{0,1\}^m:\vec{b}'<\vec{b}\\ 1\leq i_1<i_2 \leq k}}} \, \, \, \bigvee_{j=1}^m \ell_{1-b'_i}(X_{i_1,j})\lor \ell_{1-b'_j}(X_{i_2,j}) \right).
	\end{align*}
	Here, by construction, $2^m<n$, and therefore, the circuit has size $\tilde{O}(n^4)$.
	Note that the $\text{bPHP}^k_n$ formulas have width $3m$, which dominates the width of the formulas $\varphi_0$ and $\varphi_1$ constructed by Slivovsky's method. By Theorem \ref{thm:php-width}, we have an exponential lower bound on
	resolution width for refuting the pair. Theorem~\ref{thm:length-width}
 gives an exponential proof size bound, which in turn gives the claimed bound on the circuits extracted from the proof by Slivovsky's method.
\end{proof}

\section{The Power of NP Oracles}\label{sec:power}
We now turn to solving synthesis using NP oracles, inspired by Bshouty et al.~\shortcite{BCGKT1996}. We first note that the NP oracles are really necessary for efficient Skolem synthesis in general.

\subsection{Necessity of NP Oracle Access}

A fundamental question is whether the full power of an NP oracle is
truly necessary for efficient synthesis. Satisfying assignments are a
special case of Skolem functions, when there are no universally quantified
variables. Then, the condition that the variables are uniquely defined
corresponds to a unique satisfying assignment. It is then a straightforward
corollary of the classic reduction by Valiant and Vazirani~\shortcite{valiant1986np}
from SAT to unique-SAT that unique Skolem synthesis suffices to solve satisfiability:

\begin{theorem}
	If there is a randomized polynomial-time algorithm such that on input a formula
	$F(X,Y)$ such that $\forall X\exists Y F(X,Y)$ is a tautology in which $y_i \in 
	Y$ is uniquely defined in terms of $(X,Y^{1:i-1})$ by a circuit of size polynomial in $|F|$, returns a
	Skolem function for $y_i$, then NP$=$RP.
\end{theorem}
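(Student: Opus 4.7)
The plan is to reduce \textsc{Sat} to the supposed unique Skolem synthesizer via the Valiant--Vazirani isolation lemma. Given an arbitrary \textsc{Cnf} instance $\phi(Z)$ on $n$ variables, first invoke Valiant--Vazirani to produce a polynomial-size formula $\phi'(Z)$ such that: if $\phi$ is unsatisfiable then so is $\phi'$; and if $\phi$ is satisfiable then $\phi'$ has a \emph{unique} satisfying assignment $\vec{z}^*$ with probability $\Omega(1/n)$.

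Next, I would wrap $\phi'$ as a Skolem synthesis instance by introducing a single control bit $X_1$ (so $X=\{X_1\}$) and setting $Y=Z$. Define
\[
F(X_1,Y) \;=\; \Bigl(\neg X_1 \land \bigwedge_{i=1}^m \neg Y_i\Bigr) \;\lor\; \bigl(X_1 \land \phi'(Y)\bigr),
\]
which has size polynomial in $|\phi|$. When $\phi'$ has a unique witness $\vec{z}^*$, $F$ satisfies the preconditions of the hypothesized algorithm: $\forall X\,\exists Y\, F(X,Y)$ is a tautology, witnessed by $0^m$ at $X_1 = 0$ and $\vec{z}^*$ at $X_1 = 1$; each $Y_i$ is uniquely defined by $X_1$ alone (hence trivially by $(X, Y^{1:i-1})$); and the unique Skolem function $\psi_i(X_1) = X_1 \land z^*_i$ is computed by an $O(1)$-size circuit, certainly polynomial in $|F|$.

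The resulting $\mathsf{RP}$ algorithm for \textsc{Sat} runs $O(n)$ independent trials. Each trial draws a fresh $\phi'$, builds $F$, invokes the hypothesized synthesizer to obtain candidate circuits $\psi_1, \ldots, \psi_m$, reads off $\hat{z}_i := \psi_i(1)$, and deterministically checks whether $\phi(\hat{z}) = 1$; output \textsc{Sat} as soon as some trial produces a satisfying $\hat{z}$, and \textsc{Unsat} otherwise. If $\phi$ is unsatisfiable then so is $\phi'$ in every trial, so no $\hat{z}$ can satisfy $\phi$ and the procedure never errs. If $\phi$ is satisfiable, each trial succeeds with probability $\Omega(1/n)$---the Valiant--Vazirani isolation probability times the (at least constant, after standard amplification) success probability of the synthesizer on an input meeting its preconditions---so $O(n)$ trials yield constant acceptance probability.

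The main subtlety is that when the precondition fails---either because Valiant--Vazirani did not isolate a witness, or because $\phi$ is unsatisfiable---the synthesizer's output is unconstrained and we cannot detect failure directly. The construction sidesteps this by making verification deterministic and one-sided: a spurious or useless output can at worst fail to certify satisfiability but never falsely certifies it. The one remaining technical point, straightforward to verify, is that the standard encoding of $\phi'$ and $F$ keeps $|F|$ polynomial in $|\phi|$ and the witness Skolem circuits polynomial in $|F|$, so the hypothesized algorithm's polynomial-circuit-size assumption is met.
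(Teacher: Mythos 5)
Your proof is correct and follows essentially the same route as the paper: both reduce \textsc{Sat} to unique Skolem synthesis via Valiant--Vazirani isolation and rely on one-sided, deterministic verification of the synthesizer's output to get an RP algorithm. The only cosmetic difference is that you wrap the isolated formula with a dummy control bit $X_1$, whereas the paper simply takes $X=\emptyset$ and asks the synthesizer for constant circuits computing the unique satisfying assignment of $\exists Y\,\Psi(Y)$.
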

\begin{proof}
	Recall that Valiant and Vazirani~\shortcite{valiant1986np} gave a polynomial-time 
	reduction from CNF satisfiability that
	\begin{compactenum}
		\item if the input CNF is satisfiable, produces a CNF with a unique satisfying assignment with constant probability
		\item if the input CNF is unsatisfiable, produces an unsatisfiable CNF
	\end{compactenum}
	Suppose we run the reduction and obtain the CNF $\Psi(Y)$; if the original formula
	was satisfiable and $\Psi$ had a unique satisfying assignment, on input 
	$\exists Y\Psi(Y)$, the hypothetical Skolem synthesis algorithm must run in time 
	polynomial in the size of $\Psi$ and output constant circuits (no inputs) for each 
	$y_i$ that satisfy $\Psi(Y)$, i.e., which evaluate to the unique satisfying 
	assignment of $\Psi$. This gives an RP algorithm for SAT and hence NP.
\end{proof}

Thus, in conclusion, we see that an algorithm for unique Skolem synthesis can essentially be used as an NP oracle. Deciding NP is thus a necessary condition for this problem.

\subsection{When NP Oracles Suffice}

Now that we see that it is not enough to directly apply the method of Bshouty 
et al.\ bit-by-bit, we turn to identifying situations where we can extend the
method to learn Skolem functions. The first case is when $Y$ is unique. 
Actually, more generally, we can find circuits for those variables in $Y$ that are uniquely
determined by previous variables.

\begin{theorem}\label{lem:unique-bits}
	For a given $F(X,Y)$, if $Y_i \in Y$ is uniquely defined in terms of 
	$(X,Y^{1:i-1})$ by a circuit of size $s$, then we can learn a Skolem 
	function for $Y_i$ with polynomially many NP oracle calls with high probability.
\end{theorem}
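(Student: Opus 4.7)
The plan is to lift the NP-oracle halving algorithm of Bshouty et al.\ to the unique-bit setting. Let $\mathcal{C}_s$ denote the class of Boolean circuits on $X\cup Y^{1:i-1}$ of size at most $s$; by hypothesis it contains at least one correct Skolem circuit $\psi_i^\star$ for $Y_i$. I would maintain a set $S$ of triples $(\vec{x}_t,\vec{y}_t^{1:i-1},y_t^\star)$ accumulated over the run, and let $\mathcal{C}_S\subseteq\mathcal{C}_s$ be the version space of circuits consistent with $S$. Each round proceeds as follows: (i) draw $k=O(n+s)$ circuits approximately uniformly from $\mathcal{C}_S$; (ii) let $h$ be their pointwise majority vote; (iii) query the NP oracle for a satisfying assignment of $F(X,Y)\wedge (Y_i\neq h(X,Y^{1:i-1}))$; (iv) if no such assignment exists, return $h$. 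Otherwise the returned witness $(\vec{x}^\star,\vec{y}^\star)$ gives a counterexample input, and since $Y_i$ is uniquely defined given $(X,Y^{1:i-1})$, the value $y_i^\star$ reported by the oracle is in fact the unique correct bit at this input; append $(\vec{x}^\star,\vec{y}^{\star,1:i-1},y_i^\star)$ to $S$ and iterate.

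The two technical ingredients are approximate uniform sampling and a halving argument. For sampling, membership in $\mathcal{C}_S$ is an NP predicate over circuit descriptions of length $O(s\log s)$, so the Jerrum--Valiant--Vazirani reduction, combined with self-reducibility over the description bits, yields samples within total variation distance $1/\poly(n,s)$ of uniform on $\mathcal{C}_S$ using polynomially many NP queries. For halving, if $h$ is wrong at $(\vec{x}^\star,\vec{y}^{\star,1:i-1})$, then by definition of majority strictly more than half of the drawn samples disagree with $\psi_i^\star$ there; a Chernoff bound over $k=O(n+s)$ samples transfers this to the underlying class, so with probability $1-2^{-\Omega(n+s)}$ at least a $(1/2-\epsilon)$-fraction of $\mathcal{C}_S$ is eliminated when the new triple is added to $S$. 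Since $|\mathcal{C}_s|\leq 2^{O(s\log s)}$, only $O(s\log s)$ counterexamples suffice before either $h$ is certified correct or the version space collapses to $\{\psi_i^\star\}$, and a union bound over polynomially many rounds yields overall failure probability $2^{-\Omega(n)}$.

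The main obstacle is calibrating the three sources of error---the sampling tolerance, the empirical-to-population Chernoff gap, and the halving constant $\delta$---so that they compose cleanly: the total variation error from approximate sampling must be small enough that Chernoff concentration still guarantees a true strict majority of $\mathcal{C}_S$ disagreeing with $\psi_i^\star$ at each new counterexample, a subtlety that is absent in the exact halving strategy of \cite{BF72} but central to the NP-oracle variant. Once these constants are pinned down, correctness at termination is immediate: if the NP oracle finds no counterexample, $h$ agrees with the unique value of $Y_i$ on every $(X,Y^{1:i-1})$ extendable to a satisfying assignment of $F$, so $h$ is a valid Skolem circuit for $Y_i$, and the whole procedure has used only $\poly(n,s,m)$ calls to the NP oracle.
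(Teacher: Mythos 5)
Your proposal is correct and follows essentially the same route as the paper's proof: approximately uniform sampling (via an NP oracle) of circuits from the version space consistent with past counterexamples, a majority-vote hypothesis, an NP-oracle satisfiability check on $F(X,Y)\wedge(Y_i\neq h(X,Y^{1:i-1}))$ to find counterexamples, and a halving argument over the $2^{O(s\log s)}$ circuits of size $s$ giving $O(s\log s)$ rounds. The one place to tighten the write-up is the concentration step: since the counterexample is chosen adversarially after $h$ is fixed, you should apply the Chernoff bound to each of the $2^{n+i-1}$ possible inputs and union-bound over all of them (which your choice $k=O(n+s)$ already supports, and which is how the paper argues), rather than invoking Chernoff at the returned counterexample; the paper also adds a doubling trick for the case that $s$ is not known in advance.
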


\begin{proof}
	Let us first suppose that we are given $s\geq n+i$, an upper bound on the size 
	of a circuit computing $Y_i$ from $X,Y^{1:i-1}$. Our algorithm follows the 
	approach of Bshouty et al., adapted to the relational synthesis setting.
	
	The algorithm proceeds iteratively. We maintain a set of counterexamples 
	$\{(\vec{x}_1,\vec{y}_1),\ldots,(\vec{x}_k,\vec{y}_k)\}$ where each 
	$(\vec{x}_j,\vec{y}_j)$ is a complete assignment to all variables $(X,Y)$ 
	satisfying $F(\vec{x}_j,\vec{y}_j) = 1$. Initially, we have $k=0$ 
	counterexamples.
	
	At iteration $k+1$, we perform the following steps:
	
	\textbf{Step 1: Sample candidate circuits.} Using the NP oracle, we sample 
	$d\cdot s$ circuits $g_1,\ldots,g_{ds}$ (for some constant $d>1$) that are 
	$(1+\delta)$-close to uniformly at random (in statistical distance) from the set of all circuits of size 
	at most $s$ that are \emph{consistent with all previous counterexamples}. 
	Specifically, each sampled circuit $g_j$ must satisfy: for every counterexample 
	$(\vec{x}_\ell,\vec{y}_\ell)$ with $\ell \in \{1,\ldots,k\}$, we require 
	$g_j(\vec{x}_\ell,\vec{y}_\ell^{1:i-1}) = \vec{y}_\ell[i]$, where 
	$\vec{y}_\ell[i]$ denotes the $i$-th component of $\vec{y}_\ell$.
	
	\textbf{Step 2: Construct majority vote function.} We define $h_{k+1}(X,Y^{1:i-1})$ by
	\begin{align*}
		\mathrm{majority}\{g_1(X,Y^{1:i-1}),\ldots,
		g_{ds}(X,Y^{1:i-1})\}.
	\end{align*}
	
	\textbf{Step 3: Check for counterexample.} We use the NP oracle to determine 
	whether there exists an assignment $(\vec{x}_{k+1},\vec{y}_{k+1})$ satisfying 
	\begin{align*}
		E_{k+1}(X,Y) := F(X,Y) \wedge (Y_i \neq h_{k+1}(X,Y^{1:i-1})).
	\end{align*}
	
	If no such assignment exists (i.e., $E_{k+1}$ is unsatisfiable), then 
	$h_{k+1}(\vec{x},\vec{y}^{1:i-1}) = Y_i$ for all $(\vec{x},\vec{y})$ 
	satisfying $F$, and we output $h_{k+1}$ as the desired Skolem function.
	
	If such an assignment exists, we add $(\vec{x}_{k+1},\vec{y}_{k+1})$ to our 
	counterexample set and proceed to iteration $k+2$.
	
	\textbf{Analysis.} We now analyze why this algorithm terminates in polynomial 
	time. The key insight is that $h_{k+1}$ is a ``good'' hypothesis: either it correctly computes $Y_i$, or any counterexample 
	to it eliminates a large fraction of the remaining candidate circuits.
	For a suitable choice of constants $d$ and $\delta$, and for each input 
	$(\vec{x},\vec{y}^{1:i-1}) \in \{0,1\}^n \times \{0,1\}^{i-1}$, the 
	probability that $h_{k+1}$ agrees with fewer than $1/4$ of the circuits of 
	size at most $s$ that are consistent with all previous counterexamples is at 
	most $2^{-2s}$. 
	
	Since there are $2^{n+i-1}$ possible inputs $(\vec{x},\vec{y}^{1:i-1})$ and 
	$s \geq n+i$, we have $2^{n+i-1} \leq 2^{s-1}$. By a union bound, the 
	probability that there exists any input on which $h_{k+1}$ agrees with fewer 
	than $1/4$ of the consistent circuits is at most 
	$2^{n+i-1} \cdot 2^{-2s} \leq 2^{s-1-2s} = 2^{-s-1}$.
	
	Therefore, with probability at least $1-2^{-s-1}$, the function $h_{k+1}$ is 
	$1/4$-good: for any counterexample $(\vec{x},\vec{y})$ where 
	$h_{k+1}(\vec{x},\vec{y}^{1:i-1}) \neq \vec{y}[i]$, at least $1/4$ of the 
	circuits consistent with previous counterexamples will disagree with the true 
	Skolem function on this input, and thus be eliminated.
	
	Since there are at most $2^{O(s\log s)}$ circuits of size at most $s$, and 
	each good iteration eliminates at least a $1/4$ fraction of remaining circuits, 
	the algorithm terminates within $O(s\log s)$ iterations. By a union bound over 
	all iterations, the probability that we find a good $h_\ell$ in all 
	$O(s\log s)$ iterations is at least 
	$1 - O(s\log s) \cdot 2^{-s-1} \geq 1 - 2^{-s+\log s+\log\log s+D}$ 
	for some constant $D$.
	
	\textbf{Handling unknown circuit size.} If the circuit size $s$ is not known 
	in advance, we start with an initial guess $s_0$ and double it repeatedly until 
	the algorithm succeeds. The total probability of failure across all doubling 
	phases is 
	\begin{align*}
		\sum_{j=0}^\infty 2^{-s_0 \cdot 2^j + j + \log s_0 + \log(j+\log s_0) + D} 
		< \frac{1}{4}
	\end{align*}
	for sufficiently large $s_0$.
	\end{proof}

\noindent

An NP oracle is actually not even necessary for efficient synthesis of Skolem functions for relations on a small number of $Y$ variables---we can obtain a Skolem function from the specification itself:
\begin{lemma}\label{lem:pi2-det-small-skolem}
	There is a deterministic algorithm that, given  $F(X,Y)$ where 
	$m = |Y|$, returns a circuit of size $O(|F|m\cdot 2^{2m})$ in time
	polynomial in $|F|$ and $2^m$ that computes a Skolem function for $F$.
\end{lemma}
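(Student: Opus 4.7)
The plan is to synthesize Skolem functions by brute-force enumeration over the $2^m$ assignments to $Y$, implementing a priority-encoding circuit that, on input $\vec{x}$, outputs the lexicographically smallest $\vec{y} \in \{0,1\}^m$ with $F(\vec{x},\vec{y}) = 1$. The main task is just to write this choice as a Boolean circuit of the claimed size and to verify that it meets the Skolem specification.

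First I would fix the lexicographic enumeration $\vec{y}^{(0)},\ldots,\vec{y}^{(2^m-1)}$ of $\{0,1\}^m$ and, for each $j$, form the circuit $F_j(X) := F(X,\vec{y}^{(j)})$ by substituting the constant bits of $\vec{y}^{(j)}$ into $F$; each $F_j$ has size at most $|F|$. I would then define the candidate Skolem functions, for $i = 1,\ldots,m$, by
\[
\psi_i(X) := \bigvee_{\substack{0 \le j < 2^m\\ \vec{y}^{(j)}_i = 1}}\Bigl( F_j(X) \wedge \bigwedge_{0 \le k < j}\neg F_k(X)\Bigr).
\]
The inner parenthesized expression evaluates to $1$ on $\vec{x}$ exactly when $j$ is the least index with $F_j(\vec{x}) = 1$, and the outer disjunction extracts the $i$th bit of this lex-first satisfying assignment.

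Correctness is a one-line case split. If $\exists Y\,F(\vec{x},Y) = 1$, let $j^*$ be the smallest index with $F_{j^*}(\vec{x}) = 1$; then $\Psi(\vec{x}) = \vec{y}^{(j^*)}$ and $F(\vec{x},\Psi(\vec{x})) = 1$. Otherwise every term vanishes, so $\Psi(\vec{x}) = \vec{0}$ and $F(\vec{x},\Psi(\vec{x})) = 0$, which is consistent with the specification $\exists Y\,F(X,Y)\equiv F(X,\Psi(X))$. For the size bound, each $\psi_i$ is a disjunction of at most $2^{m-1}$ terms, and each term is a conjunction of one $F_j$ with up to $2^m - 1$ negated $F_k$'s, i.e.\ $O(|F|\cdot 2^m)$ gates; this yields $O(|F|\cdot 2^{2m})$ per output bit and $O(|F|m\cdot 2^{2m})$ in total. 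Emitting this circuit clearly takes time polynomial in $|F|$ and $2^m$.

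I do not anticipate any real obstacle: the construction is pure enumeration and the correctness argument is immediate. The only mild subtlety is the unsatisfiable case for a given $\vec{x}$, where the disjunction collapses to $\vec{0}$ and one must appeal to the fact that the Skolem specification only constrains $\Psi$ when $\exists Y\,F(X,Y)$ holds. Sharing the prefix conjunctions $\bigwedge_{k<j}\neg F_k$ incrementally across $j$ and across bits $i$ would in fact sharpen the bound to $O((|F|+m)\cdot 2^m)$, but the stated bound comfortably absorbs this slack, so I would not bother with the optimization in the write-up.
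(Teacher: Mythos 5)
Your construction is exactly the paper's: both select the lexicographically first $\vec{y}$ with $F(\vec{x},\vec{y})=1$ via the disjunction $\bigvee_{j:\vec{y}^{(j)}_i=1} F_j(X)\wedge\bigwedge_{k<j}\neg F_k(X)$, and your size accounting matches the stated $O(|F|m\cdot 2^{2m})$ bound. The proposal is correct and, if anything, spells out the correctness case-split and gate count more explicitly than the paper does.
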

\begin{proof}
	Observe indeed that we can compute the lexicographically first $Y$ satisfying
	$F$ for a given $X$ by
	\[
	f_i(X)=\bigvee_{{\substack{\vec{b}\in\{0,1\}^m:\\b_i=1}}}F(X,\vec{b})\land\bigwedge_{{\substack{\vec{b}'\in\{0,1\}^m:\\\vec{b}'<\vec{b}}}}\neg F(X,\vec{b}').
	\]
	Indeed, since this circuit outputs a $\vec{y}$ for a given $\vec{x}$ satisfying $F(\vec{x},\vec{y})$
	whenever one exists, it is a Skolem function and can be produced in
	time polynomial in the size of the circuit.
\end{proof}

\noindent
Observe that what was critical in the above construction was that the number 
of possible solutions was small. We now show that with the NP oracle, this 
approach can be extended to generate Skolem functions in time polynomial in 
the size of the smallest image among all valid $\Psi$.

\begin{theorem}
	Let $F(X,Y)$ be a relational specification. If there exists an ordered set of 
	Skolem functions $\Psi^*$ for $F$ with $|\mathrm{Im}(\Psi^*)| = k$, then 
	there is a randomized algorithm with access to an NP oracle that synthesizes 
	Skolem functions for $F$ in time $\poly(n, m, |F|, k)$.
\end{theorem}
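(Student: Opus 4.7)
The plan is to reduce Skolem synthesis to a set-cover problem that we can solve using the NP oracle. View $T_X = \{\vec{x} : \exists \vec{y}\, F(\vec{x},\vec{y})\}$ as the universe, and for each $\vec{y}\in\{0,1\}^m$ take the set $S_{\vec{y}} = \{\vec{x}\in T_X : F(\vec{x},\vec{y})\}$. The hypothesis $|\mathrm{Im}(\Psi^*)|=k$ guarantees that $\mathrm{Im}(\Psi^*)$ is a cover of size at most $k$. Given any cover $S\subseteq\{0,1\}^m$ with $|S|=\poly(n,k)$, we can build a Skolem function of size $\poly(|F|,n,m,k)$ by restricting the construction from Lemma~\ref{lem:pi2-det-small-skolem} to $S$, namely
\[
\psi_i(X) \;=\; \bigvee_{\vec{y}\in S:\,y_i=1}\Bigl( F(X,\vec{y}) \land \bigwedge_{\vec{y}'\in S,\,\vec{y}'<\vec{y}} \neg F(X,\vec{y}')\Bigr),
\]
which on any $\vec{x}\in T_X$ returns the lexicographically first $\vec{y}\in S$ satisfying $F(\vec{x},\vec{y})$; correctness follows because $S$ covers $T_X$. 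So the algorithmic task reduces to constructing such an $S$.

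To construct $S$ I would run approximate greedy set cover. Initialize $S=\emptyset$ and implicitly maintain the uncovered set $U = T_X \setminus \bigcup_{\vec{y}\in S}S_{\vec{y}}$. Each round, first test emptiness of $U$ with the NP query ``$\exists\vec{x},\vec{y}\colon F(\vec{x},\vec{y})\land\bigwedge_{\vec{y}'\in S}\neg F(\vec{x},\vec{y}')$,'' halting if unsatisfiable; otherwise find a $\vec{y}^*$ that approximately maximizes $|S_{\vec{y}^*}\cap U|$ and add it to $S$. Since $\mathrm{Im}(\Psi^*)$ is a size-$k$ cover of $U$, some $\vec{y}$ covers at least a $1/k$-fraction of $U$, so an approximate maximizer covers an $\Omega(1/k)$-fraction. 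Thus $|U|$ shrinks by a factor of $1-\Omega(1/k)$ per round, and after $O(kn)$ rounds $U=\emptyset$ with $|S|=O(kn)$.

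The main obstacle is implementing the approximate-greedy step, as exact counting of $|S_{\vec{y}}\cap U|$ is \#P-hard. I would rely on Stockmeyer's hashing-based approximate counting: ``$\exists\vec{y}$ with $|S_{\vec{y}}\cap U|\geq t$'' is equivalent, with high probability over a random pairwise-independent hash $h\colon\{0,1\}^n\to\{0,1\}^r$ with $2^r$ on the order of $t$, to the NP statement ``$\exists\vec{y}$ and distinct $\vec{x}_1,\dots,\vec{x}_c$ with $h(\vec{x}_j)=0^r$, $F(\vec{x}_j,\vec{y})$, and $\bigwedge_{\vec{y}'\in S}\neg F(\vec{x}_j,\vec{y}')$ for each $j$.'' Binary-searching $t$ in powers of two then locates a constant-factor-optimal $\vec{y}^*$ per round, extractable from the NP witness by self-reduction. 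This yields a $\mathrm{BPP}^{\mathrm{NP}}$ procedure; all queries stay polynomial-size because $|S|=O(kn)$ throughout, and a union bound over $O(kn)$ rounds delivers the claimed randomized $\poly(n,m,|F|,k)$-time algorithm.
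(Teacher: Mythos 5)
Your proposal is correct and follows essentially the same route as the paper: reduce to set cover over $\satisfying{F}_{\downarrow X}$ with sets $S_{\vec{y}}$, run approximate greedy cover using NP-oracle queries with pairwise-independent hashing to find a $\vec{y}$ covering an $\Omega(1/k)$-fraction of the uncovered inputs each round, and finish with the lexicographically-first selector circuit from Lemma~\ref{lem:pi2-det-small-skolem} restricted to the cover. The only notable difference is that you implement the greedy step by threshold/binary-search approximate counting (which avoids needing to guess $k$), whereas the paper hashes $X$ down using a guess for $k$ and handles the unknown $k$ by doubling; both yield the claimed $\poly(n,m,|F|,k)$ bound.
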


\begin{proof}
	Since there exist Skolem functions $\Psi^*$ with image size $k$, we know 
	that $S = \mathrm{Im}(\Psi^*) \subseteq \{0,1\}^m$ satisfies $|S| = k$ and
		$\satisfying{F}_{\downarrow X} = \{\vec{x}: \exists \vec{y} \in S \text{ s.t. } 
		F(\vec{x},\vec{y}) = 1\}.$

	We construct Skolem functions by first identifying a small subset 
	$S' \subseteq \{0,1\}^m$ that covers all satisfying inputs, then building 
	circuits that select appropriate outputs from $S'$.
	
	\textbf{Phase 1: Constructing the covering set $S'$.}
	We approximate the greedy set-cover algorithm using the NP oracle. Initially, 
	let $S_0 = \emptyset$. We iteratively build sets $S_i$ until
	\begin{align*}
		\satisfying{F}_{\downarrow X} = \{\vec{x}: \exists \vec{y} \in S_i 
		\text{ s.t. } F(\vec{x},\vec{y}) = 1\}.
	\end{align*}
	
	In iteration $i$, we first estimate the number of inputs covered by $S_i$. 
	Using the NP oracle on the formula $\bigvee_{\vec{y} \in S_i} F(X,\vec{y})$ 
	with appropriate hashing, we obtain $N_i$ such that
		$N_i \geq \frac{1}{2} \cdot |\{\vec{x}: \exists \vec{y} \in S_i \text{ s.t. } 
		F(\vec{x},\vec{y}) = 1\}|$
	
	Next, we select $\vec{y}_i \in \{0,1\}^m \setminus S_i$ to add to $S_{i+1}$. 
	We use the NP oracle on $
		F(X,Y) \wedge \bigwedge_{\vec{y} \in S_i} \neg F(X,\vec{y})$
	combined with a $\lceil\log(N_i/(2k))\rceil$-bit hash on $X$ to ensure we 
	sample from uncovered inputs. This yields $\vec{y}_i$ such that, with 
	probability at least $1/2$,
	\begin{align*}
		|\{\vec{x}: F(\vec{x},\vec{y}_i) = 1 \text{ and } \forall \vec{y} \in S_i, 
		F(\vec{x},\vec{y}) = 0\}| \geq \frac{N_i}{2k}.
	\end{align*}
	
	Since $S$ has size $k$ and covers all inputs in $\satisfying{F}_{\downarrow X}$, 
	at least one element of $S$ must cover at least a $1/k$ fraction of the 
	remaining uncovered inputs. Thus, with constant probability, each iteration 
	reduces the number of uncovered inputs by a factor of $(1 - 1/(4k))$.
	After $O(k \log n)$ iterations, with high probability, $S_i$ covers all 
	inputs in $\satisfying{F}_{\downarrow X}$ and at that point we have $S' = S_i$. The expected size of $S'$ is 
	$O(k \log n)$ by the analysis of the greedy set-cover algorithm. 
	To handle the unknown value of $k$, we run the algorithm with successively 
	doubling guesses for $k$, starting from $1$. The total time remains 
	polynomial in $k$.
	
	\textbf{Phase 2: Constructing Skolem functions.}
	Given the covering set $S'$ with $|S'| = O(k \log n)$, we construct Skolem 
	functions that, for each input $\vec{x}$, output the lexicographically first 
	$\vec{y} \in S'$ such that $F(\vec{x}, \vec{y}) = 1$.
	Using the construction from Lemma~\ref{lem:pi2-det-small-skolem}, we build 
	a circuit computing:
	\begin{align*}
		\psi_i(\vec{x}) = \bigvee_{\substack{\vec{y} \in S': y_i = 1}} 
		\left( F(\vec{x}, \vec{y}) \wedge \bigwedge_{\substack{\vec{y}' \in S': 
				\vec{y}' <_{\text{lex}} \vec{y}}} \neg F(\vec{x}, \vec{y}') \right).
	\end{align*}
	
	This circuit has size $O(|F| \cdot m \cdot |S'|^2) = O(|F| \cdot m \cdot 
	k^2 \log^2 n)$ and can be constructed in time polynomial in $n$, $m$, $|F|$, 
	and $k$.
\end{proof}

\section{Conclusion}\label{sec:conclusion}  

We have presented a systematic theoretical investigation into the power of 
functional synthesis approaches that rely on NP oracles. Our main 
contributions are fourfold. First, we examined the applicability of 
computational learning theory to functional synthesis and demonstrated 
fundamental obstacles that prevent direct extension of these techniques due 
to the relational nature of specifications. Second, we established 
fundamental limitations of interpolation-based approaches for handling 
uniquely-defined variables, proving that resolution-based interpolation must 
produce exponential-size circuits even when small Skolem functions exist. 
Third, we proved that access to an NP oracle is inherently necessary for 
efficient synthesis, suggesting that the reliance on SAT solvers in practical 
tools is not merely an implementation choice but rather necessary for 
achieving scalability. Fourth, we showed that NP oracles enable synthesis 
of Skolem functions in time that scales with the specification size and 
function image size.

\section*{Acknowledgements}
Juba's work was supported in part by the NSF award IIS-1942336.
Meel's work was supported in part by the Natural Sciences and
Engineering Research Council of Canada (NSERC) [RGPIN-
2024-05956]

\end{document}